\setlist[itemize]{noitemsep} 
\renewcommand\thesection{\arabic{section}} 
\renewcommand\thesubsection{\thesection.\arabic{subsection}} 
\titleformat{\section}[block]{\large\scshape\centering}{\thesection.}{1em}{} 
\titleformat{\subsection}[block]{\large}{\thesubsection.}{1em}{} 
\newtheorem{remark}{Remark}
\newtheorem*{defini}{Definition}
\title{Sparsity estimation in compressive sensing with application to MR images} 
\author[1]{Jianfeng Wang\thanks{corresponding author with email: jianfeng.wang@umu.se}}
\author[1]{Zhiyong Zhou}
\author[2]{Anders Garpebring}
\author[1]{Jun Yu}
\affil[1]{Department of mathematics and mathematical statistics, Ume\aa{ } University}
\affil[2]{Department of radiation Science, Ume\aa{ } University}
\date{} 
\begin{document}

\maketitle


\section{Introduction}
Compressive sensing (CS) initially emerged around the year 2006 \citep{Donoho2006,Candes2006}. The aim of CS is to recover a $s$-sparse signal $\mathbf{x}\in \mathbb{C}^N$ from $m$ noisy observations $\mathbf{y} \in \mathbb{C}^m$:
\begin{align*}
\mathbf{y}=A\mathbf{x}+\mathbf{e} \tag{1}
\end{align*}
In (1), $\mathbf{x}$ has $s$ non-zero elements implying that $s=\lVert\mathbf{x}\rVert_0$, $A\in \mathbb{C}^{m\times N}$ is a measurement matrix with $m\ll N$ satisfying restricted isometry property \citep[chapter 6]{Foucart2013} and $\mathbf{e}\in\mathbb{C}^m$ is additive noise such that $\lVert\mathbf{e}\rVert_2\le \eta$ for some $\eta\ge0$.

To recover $\mathbf{x}$ in $(1)$, it can be translated into a quadratically constrained $\ell_1$-minimization problem:
\begin{align*}
\underset{\mathbf{z}\in\mathbb{C}^N}{\text{minimize}}\quad \lVert\mathbf{z}\rVert_1 \quad \quad \text{subject to} \quad \lVert A\mathbf{z}-\mathbf{y}\rVert_2\le \eta.
\end{align*}
There are several specific algorithms to solve the optimization problem, e.g. orthogonal matching pursuit (OMP), compressive sampling matching pursuit (CoSaMP), iterative hard thresholding (IHT) and hard thresholding pursuit (HTP) \citep[chapter 3]{Foucart2013}. All the algorithms mentioned above need sparsity $s$ as an input. However, $s$ is typically unknown in practice. The difficulty to recover $\mathbf{x}$ consists in estimating the sparsity $s$.

In this study, we propose a novel estimator for sparsity by using Bayesian hierarchical model (BHM). By assuming the sparsity $s$ to be random, our target parameter is the mathematical expectation of $s$, $E(s)$. The assumption of the randomness of $s$ is reasonable in practice, since noise commonly exists in the process of acquiring signals such that it is impossible to obtain two sparse signals that are exactly the same even under the same control \citep{Henkelman1985}.
We estimate $E(s)$ by using an observed 2D sparse image $\mathbf{x}$. The unbiasedness of the estimator is derived analytically, and it can be confirmed through a simulation study. Another interesting finding is that the estimator is asymptotically normally distributed under regular conditions, which could be used to construct confidence interval of $E(s)$. This property is also confirmed through the simulation study. In the simulation study, 2D magnetic resonance (MR) image is considered as an input signal. MR images are not sparse in general, but they are compressible \citep{Haldar2010}.  For instance, they can be be transformed into a sparse image, $\mathbf{x}$, through wavelet thresholding techniques \citep{Prinosil2010}. In MR imaging circumstance, the measurement matrix $A$ is a partial Fourier matrix and $\mathbf{y}$ is the measurements in frequency domain called $k$-space. In practice, once the estimate from the model is obtained, it could be directly used for recovering future MR images under the framework of CS if they are believed to have the same sparsity level $(E(s))$ after sparsification, for example but not limited to two scans of one person's brain.

This paper is organized as follows. The statistical theory is introduced in Section 2. In Section 3, the methods used for the simulation study and real data analysis are specified. Results are presented in Section 4, followed by conclusion and discussion in Section 5. The paper is closed with proofs of the statistical properties in Appendix.

%
\section{Theory}
Since $s=\lVert\mathbf{x}\rVert_0$, the concrete non-zero value of each element in $\mathbf{x}$ is not of interest. Assume that $s$ is random, in the sense of that each element of $\mathbf{x}$ is randomly assigned by either zero or non-zero value. Thus, instead of $s$, the mathematical expectation of $s$, $E(s)$, is the parameter of interest. In this section, we introduce a BHM \citep{Cressie2011} to construct a new estimator for $E(s)$ of an observed sparse image $\mathbf{x}$.

BHM is a statistical model consisting of multiple layers. It is common to have three layers in the model. The first layer is data model used to model observed data, the second layer is process model used to model the unknown parameters of interest in data model, and the last one is hyperparameter model used to model unknown hyperparameters. Let $o_i=\mathbbm{1}_{(x_i\ne 0)}, \text{ where } \mathbbm{1}$ is the indicator function. A Bernoulli distribution can be used to describe the assumption of randomness of $s$:
\begin{align*}
\text{Layer 1}\quad o_i|p_i\sim Ber(p_i),\quad 0<p_i<1
\end{align*}
Let $\mathbf{o}=\{o_1,o_2,...,o_N\}$ and $\mathbf{p}=\{p_1,p_2,...,p_N\}$, where $N$ is the number of elements in $\mathbf{o}$ as well as in $\mathbf{x}$.

Then a second layer is needed to describe the distribution of $p_i$ given by
\begin{align*}
\text{Layer 2}\quad \text{logit}(p_i)= \mu+m_i+\epsilon_i
\end{align*}
where $\mu$ is intercept term, $m_i$ represents structured spatial effect and $\epsilon_i$ represents unstructured spatial effect. It is very common to include these two types of spatial effects, since there is typically a correlated random structure that induces correlation based on neighborhood structure as well as an uncorrelated random noise that varies from pixel to pixel \citep{Carroll2015}. Let $\mathbf{m}=\{m_1,m_2,...,m_N\}$ be normal distributed with common mean $\mathbf{0}$ and precision matrix $Q(\boldsymbol{\theta})$, i.e. $\mathbf{m}\sim N(\mathbf{0}, Q(\boldsymbol{\theta}))$. The precision matrix is defined as the inverse of covariance matrix of the random field. Let $\boldsymbol{\epsilon}=\{\epsilon_1,\epsilon_2,...,\epsilon_N\}$ be independent and identically distributed normal with common mean $\mathbf{0}$ and precision matrix $\tau_{iid} \cdot\mathrm{I}_N$, i.e. $\boldsymbol{\epsilon}\sim N(\mathbf{0},\tau_{iid}\cdot \mathrm{I}_N)$, where $\tau_{iid}$ is marginal precision and $\mathrm{I}_N$ is an identity matrix with dimension $N\times N$. There are several spatial models using Markov property to construct the precision matrix $Q$ such that the random field $\mathbf{m}$ is called as Gaussian Markov Random Field (GMRF), for instance, Besag model \citep{Besag1991}, Leroux model \citep{Leroux2000} by using neighborhood information. The advantage of using GMRF is to produce great computational benefits \citep{Rue2005}. To use such models, conditional properties of the GMRF should be specified in advance, i.e. the order of neighborhood structure.
More often a GMRF with Mat\'{e}rn covariance is more intuitive and easier to specify for two distinct sites $i$ and $j$ than to specify the conditional properties, especially in geostatistics.
Mat\'{e}rn covariance is defined as \citep{Matern1960}:
\begin{align*}
Cov(m_i,m_j)=\frac{\sigma^2}{\Gamma(\nu)\times 2^{\nu-1}}(\kappa\lVert i-j\rVert_2)^\nu K_\nu(\kappa\lVert i-j\rVert_2),
\end{align*}
where $K_\nu$ is the modified Bessel function of the second kind, $\Gamma$ is the Gamma-function, $\nu$ is a smoothness parameter, $\kappa$ is a range parameter and $\sigma^2$ is marginal variance. Two important properties of a random field with Mat\'{e}rn covariance are that the random field is stationary and isotropic (when the distance is Euclidean) and the
covariance decreases as the distance between two sites $i$ and $j$ increases \citep{Matern1960,Stein1999}.
In this study, a GMRF with Mat\'{e}rn covariance is used. Actually, a continuous Gaussian random field with Mat\'{e}rn covariance function is a solution to stochastic partial
differential equation \citep{WHITTLE1954,WHITTLE1963},
\begin{align*}
(\kappa^2-\Delta)^{\alpha/2}z(\mathbf{s})&=\mathcal{W}(\mathbf{s}),\quad  \alpha=\nu+d/2,
\end{align*}
where $d$ is the dimension of the Gaussian random field $z(\mathbf{s})$ with location index $\mathbf{s}$, $\Delta$ is the Laplacian operator and $\mathcal{W}(\mathbf{s})$ is a
 random field with Gaussian white noises.
\citet{Lindgren2011} have proposed that a GMRF defined on a regular unit distance lattice with a sparse precision can be used
to represent a continuous Gaussian random field with Mat\'{e}rn covariance for $\nu\in \mathbb{Z}^+$. The sparseness of the precision matrix $Q$ is controlled by $\nu$. The smaller the $\nu$
is, the sparser the $Q$ is. For example, if $\nu=1$, $Q$ can be regarded as a precision matrix defined through the third order of neighborhood structure, and the non-zero values of $Q$ are controlled by $\kappa$.
\begin{align*}
\text{Layer 3}\quad &\text{It provides the prior distributions of the unknown hyperparameters such as}\\ &\mu, \boldsymbol{\theta}=\{\sigma, \kappa, \nu \} \text{ and } \tau_{iid}.
\end{align*}
Bayesian inference is applied to obtain the posterior distribution of $\mathbf{p}$, i.e. the distribution of $\mathbf{p}|\mathbf{o}$.

By using the posterior mean of $\mathbf{p}$, we construct an estimator for the mean sparsity $E(s)$:
\begin{align*}
\widehat{E(s)}=\sum_{i=1}^N E\left(p_i|\mathbf{O}\right),
\end{align*}
where $\mathbf{O}$ is a random field from which $\mathbf{o}$ is sampled. Its statistical properties are presented in the following propositions, of which the proofs are left in Appendix.
\begin{restatable}{proposition}{propone}
\label{prop1}
$\widehat{E(s)}$ is an unbiased estimator and its variance equals to the summation over all the elements of the covariance matrix of $E(\mathbf{p}|\mathbf{O})$.
\end{restatable}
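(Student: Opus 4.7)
The plan is to treat $\widehat{E(s)}$ as a (measurable) function of the random field $\mathbf{O}$ and compute its mean and variance directly, leaning on the tower property of conditional expectation and the standard expansion of the variance of a sum.

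For unbiasedness, I would first rewrite the target parameter using iterated expectation: since $o_i\mid p_i\sim\mathrm{Ber}(p_i)$, we have $E(o_i)=E\bigl[E(o_i\mid p_i)\bigr]=E(p_i)$, and because $s=\sum_{i=1}^N o_i$ by the definition $o_i=\mathbbm{1}_{(x_i\ne 0)}$, linearity of expectation yields $E(s)=\sum_{i=1}^N E(p_i)$. On the estimator side, linearity again gives $E\bigl[\widehat{E(s)}\bigr]=\sum_{i=1}^N E\bigl[E(p_i\mid \mathbf{O})\bigr]$, and the tower property collapses each inner term to $E(p_i)$. Matching the two expressions completes this half of the proposition.

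For the variance claim, I would simply expand $\mathrm{Var}\bigl[\widehat{E(s)}\bigr]=\mathrm{Var}\Bigl[\sum_{i=1}^N E(p_i\mid\mathbf{O})\Bigr]$ using the bilinearity of covariance, obtaining
\begin{equation*}
\mathrm{Var}\bigl[\widehat{E(s)}\bigr]=\sum_{i=1}^N\sum_{j=1}^N \mathrm{Cov}\bigl(E(p_i\mid\mathbf{O}),\,E(p_j\mid\mathbf{O})\bigr)=\mathbf{1}^{\mathsf T}\,\mathrm{Cov}\bigl(E(\mathbf{p}\mid\mathbf{O})\bigr)\,\mathbf{1},
\end{equation*}
which is exactly the sum of all entries of the covariance matrix of the random vector $E(\mathbf{p}\mid\mathbf{O})$.

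There is no serious analytic obstacle here: the result is essentially a direct consequence of the tower property and the quadratic form identity for variance of a sum. The only points that deserve a careful sentence are (i) that $\widehat{E(s)}$ is a statistic of $\mathbf{O}$ (so its expectation and variance are taken with respect to the marginal distribution of $\mathbf{O}$ induced by the hierarchical model), and (ii) that all required first and second moments of $E(p_i\mid\mathbf{O})$ are finite because $p_i\in(0,1)$ forces $0<E(p_i\mid\mathbf{O})<1$ almost surely, so integrability is automatic and the interchange of sum and expectation/variance is justified.
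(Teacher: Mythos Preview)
Your proposal is correct and follows essentially the same route as the paper: both arguments use the tower property to reduce $E\bigl[E(p_i\mid\mathbf{O})\bigr]$ to $E(p_i)$ and then identify $\sum_i E(p_i)$ with $E(s)$ via $o_i\mid p_i\sim\mathrm{Ber}(p_i)$, and both obtain the variance by expanding $\mathrm{Var}\bigl[\sum_i E(p_i\mid\mathbf{O})\bigr]$ into the full array of pairwise covariances. Your added remark on integrability (forced by $0<E(p_i\mid\mathbf{O})<1$) is a welcome clarification that the paper leaves implicit.
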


Before presenting the next proposition, we introduce one definition and some notations which help to understand the proposition as well as its proof.
\begin{defini}
A set of random variables, of which the dimension could be any positive integer, is said to be $\rho$-radius dependent if any two random variables in the set are independent as long as the distance between the two random variables is greater than $\rho$, where $\rho\ge0$.
\end{defini}

\begin{remark}
 $\rho=0$ implies that the random variables are independent.
\end{remark}

\begin{remark}
 Under the model setting with Mat\'{e}rn covariance, $\{\text{logit}(p_i), 1\le i\le N\}$ are $\rho$-radius dependent random variables $(\rho>0)$, where the smoothing parameter $\nu$ is related to the spatial dependence $\rho$. For example, if $\nu=1$, $\rho=2$ and if $\nu=2$, $\rho=3$.
\end{remark}

Let $\rho^*=\left \lceil{\rho}\right \rceil$, the smallest integer greater than or equal to $\rho$.
Let $\phi$ be a positive integer greater than $\rho^*$. Let $n_1, n_2$ be the dimensional size of a sparse image such that $n_1 \times n_2 = N$. And the sparse image can be divided into a set of independent squares and borders which separate the squares. Each square has dimension $(2\phi+1)\times (2\phi+1)$ and consists of $(2\phi+1)^2$ random variables (pixels). The width of each border is $\rho^*$ and the border regions surrounding each square consist of $2(2\phi+1)\rho^*+(\rho^*)^2$ random variables. Let $n_{sq}$ be the number of squares. Let $S_k$ be the sum of the random variables in the $k$th square, $S_k^B$ be the sum of the random variables in the borders surrounding the $k$th square, where $k=1,2,...,n_{sq}$, also $\sigma^2_k$ be the variance of $S_k$ and $r^3_k=E|S_k|^3.$
\begin{restatable}{proposition}{proptwo}
\label{prop2}
$\widehat{E(s)}$ is asymptotically normally distributed as $n_1, n_2 \to \infty$, if
\begin{enumerate}[label=\alph*)]
  \item $E\left(p_i|\mathbf{O}\right)$'s are absolutely continuous random variables for $1\le i \le N$, and
  \item $n_{sq} \to \infty$, and $\phi \to\infty$ at a rate slower than $n_{sq}^{1/6}$.
\end{enumerate}
\end{restatable}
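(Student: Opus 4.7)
The plan is to exploit the $\rho$-radius dependence noted in Remark~2 to reduce $\widehat{E(s)}$ to a sum of independent blocks, and then apply the Lyapunov central limit theorem with the third-moment condition matching exactly the rate $\phi=o(n_{sq}^{1/6})$ in hypothesis~(b). First I would write
\begin{equation*}
\widehat{E(s)} \;=\; \sum_{i=1}^{N} E(p_i\mid\mathbf{O}) \;=\; \sum_{k=1}^{n_{sq}} S_k \;+\; \sum_{k=1}^{n_{sq}} S_k^{B},
\end{equation*}
where the squares of side $2\phi+1$ are separated by border strips of width $\rho^\ast\ge\rho$. Because each $E(p_i\mid\mathbf{O})$ is a measurable function of the $\rho$-radius dependent field $\{\mathrm{logit}(p_j)\}$ restricted to a neighborhood of pixel $i$ (through the posterior construction), two $S_k$'s whose squares are separated by at least $\rho$ are independent. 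Hence $S_1,\dots,S_{n_{sq}}$ are mutually independent, and the problem splits cleanly into a CLT for the square sums plus a control argument for the border remainder.

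For the main term, I would apply Lyapunov's condition with $\delta=1$ to the centered, rescaled sum $T_{n_{sq}}=\bigl(\sum_k S_k - \sum_k E S_k\bigr)/\sqrt{\sum_k \sigma_k^2}$. Since $E(p_i\mid\mathbf{O})\in[0,1]$ we have the deterministic bound $|S_k|\le (2\phi+1)^2$, so $r_k^3 = E|S_k - ES_k|^3 \le C\phi^6$. The absolute-continuity hypothesis~(a) is precisely what is needed to rule out degeneracy: it guarantees a uniform lower bound $\sigma_k^2 \ge c\phi^2$ (the $(2\phi+1)^2$ non-degenerate terms inside a square contribute at least linearly in the square's area once short-range dependence is accounted for). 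Combining these,
\begin{equation*}
\frac{\sum_{k=1}^{n_{sq}} r_k^3}{\bigl(\sum_{k=1}^{n_{sq}} \sigma_k^2\bigr)^{3/2}} \;\le\; \frac{C\,n_{sq}\,\phi^{6}}{(c\,n_{sq}\,\phi^{2})^{3/2}} \;=\; C'\,\frac{\phi^{3}}{n_{sq}^{1/2}} \;\longrightarrow\; 0,
\end{equation*}
precisely under hypothesis~(b). Lyapunov's CLT then gives $T_{n_{sq}}\xrightarrow{d} N(0,1)$.

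Next I would dispatch the border remainder. The number of pixels in all border strips is of order $n_{sq}\bigl(2(2\phi+1)\rho^\ast+(\rho^\ast)^2\bigr) = O(n_{sq}\phi)$, as compared with $O(n_{sq}\phi^2)$ pixels inside the squares. Since each summand is bounded, $\mathrm{Var}\bigl(\sum_k S_k^B\bigr)$ is at most of the order of its number of summands squared (or, with a short-range dependence argument, linear in it after grouping), in either case of smaller order than $\sum_k\sigma_k^2\asymp n_{sq}\phi^2$. Therefore the normalized border contribution tends to $0$ in probability, and Slutsky's theorem transfers the limit from $T_{n_{sq}}$ to $\bigl(\widehat{E(s)}-E\widehat{E(s)}\bigr)/\sqrt{\mathrm{Var}\widehat{E(s)}}$.

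The main obstacle is the lower bound $\sigma_k^2 \ge c\phi^2$: absolute continuity of the marginals gives non-degeneracy of each $E(p_i\mid\mathbf{O})$, but upgrading this to a uniform variance lower bound on the block sum requires ruling out almost-sure cancellation among the positively correlated terms inside a square. I would handle this by exhibiting an independent subset of pixels inside each square (those at pairwise distance exceeding $\rho$, which is possible since the square side grows while $\rho$ is fixed), whose contribution alone yields a variance of order $\phi^2$. Once that estimate is in hand, the matching of $\phi=o(n_{sq}^{1/6})$ with the Lyapunov ratio above closes the argument.
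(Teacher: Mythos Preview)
Your plan is essentially the paper's own argument: a Bernstein big-block/small-block decomposition, the third-moment bound $r_k^3\le (2\phi+1)^6$ from $E(p_i\mid\mathbf{O})\in(0,1)$, a variance lower bound $\sigma_k^2\ge c(2\phi+1)^2$, and the resulting ratio matching exactly the hypothesis $\phi=o(n_{sq}^{1/6})$. The paper packages these as conditions of a theorem of Harvey (2010) rather than invoking Lyapunov and Slutsky directly, but the content is the same.

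There is one substantive divergence you should be aware of. You justify the independence of the $S_k$'s by asserting that each $E(p_i\mid\mathbf{O})$ is a function of $\{\mathrm{logit}(p_j)\}$ restricted to a neighborhood of pixel~$i$; this is not true---the posterior mean depends on the \emph{entire} observation field $\mathbf{O}$, so locality does not come for free from the posterior construction. The paper handles this point differently: it uses hypothesis~(a) not (only) for non-degeneracy of the variances, but to invoke a result (Prates et al.\ 2015; Cardin 2009) that an absolutely continuous transformation of a GMRF inherits the dependence structure of the original field, so that the $\rho$-radius dependence of $\{\mathrm{logit}(p_i)\}$ passes to $\{E(p_i\mid\mathbf{O})\}$ and the squares separated by width-$\rho^\ast$ borders are genuinely independent. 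You should replace your locality claim with this inherited-dependence argument; once that is done, the paper also gets the lower bound $\sigma_k^2\ge\sum_j\mathrm{Var}\,E(p_{kj}\mid\mathbf{O})$ directly from the inherited \emph{non-negative} correlation, which is simpler than your independent-subset extraction (though your method would also work).
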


\begin{remark}
Condition $b)$ says that one can choose a relatively smaller smoothness parameter $\nu$ compared to the size of the image, implying that the spatial dependence is not strong, otherwise this assumption may not hold.
\end{remark}

\begin{remark}
 The asymptotics in the proposition refers to increasing-domain asymptotics. There is another type of asymptotics, i.e. infill asymptotics (fixed-domain asymptotics). The infill asymptotics can lead the spatial dependence $\rho$ to increase rapidly as well as $\phi$, which might break down condition $b)$. This is in line with what \citet{Cressie1993} mentioned, infill asymptotics is preferable in geostatistical data, whereas increasing-domain asymptotics is often more appropriate for lattice data.
\end{remark}
\section{Methods}
In this section, we briefly introduce background about MR images to be used and specify how to sparsify the images such that it can be analysed using the BHM described in Section 2. Furthermore, how to set the prior distributions for the BHM and how to evaluate the performance of the estimator are also presented in this section.
\subsection{Input data}
Two kinds of MR images are analysed in the study. One kind is simulated brain images with resolution $128\times 128$, and the other kind is a real brain image with resolution $256\times 256$. The simulated images were produced by using simulated gradient echo based dynamic contrast enhanced MRI scans at 3T with a noise level equivalent of a 12-channel head coil  \citep[see][for details]{Brynolfsson2014}. The real brain image was acquired with a 2D spin-echo sequence on a 3T GE Signa Positron emission tomography (PET)/MR scanner.

To be able to fit the BHM to a sparsified MR image, the following question should be answered. Is the sparsity of sparsified MR image random? Assume $\mathbf{x}_1,\mathbf{x}_2$ are two sparse images transformed from two sequential scans of MR images, $\mathbf{x}_{MRI_1}\text{ and } \mathbf{x}_{MRI_2}$, of the same brain under the same conditions, any slight difference between the two images may result in the positions as well as the numbers of non-zero values in $\mathbf{x}_1 \text{ and } \mathbf{x}_2$ are different, i.e. $s_1 \ne s_2$, which implies that $s$ is varying and can be considered as random.
\subsection{Sparsification}
Since MR image is not sparse in general, one discrete wavelet transform (DWT) followed by one thresholding method could be used to transform a MR image to a sparse image in wavelet domain. There are many DWT and thresholding methods can be used \citep{Vanithamani2011}. Since this study does not focus on evaluation of the performance among these methods, DWT of Daubechies 1 with level 3 is used and followed by the hard thresholding method to eliminate the noise. The reason of using wavelet thresholding method is that wavelet transform is good at energy compaction, the smaller coefficients represent noise and larger coefficients represent the image features. DWT decomposes the image into four sub-bands in general (i.e. A, DH,  DV and DD) shown in Figure 1.
\begin{figure}
    \centering
    \includegraphics[width=0.4\linewidth,keepaspectratio]{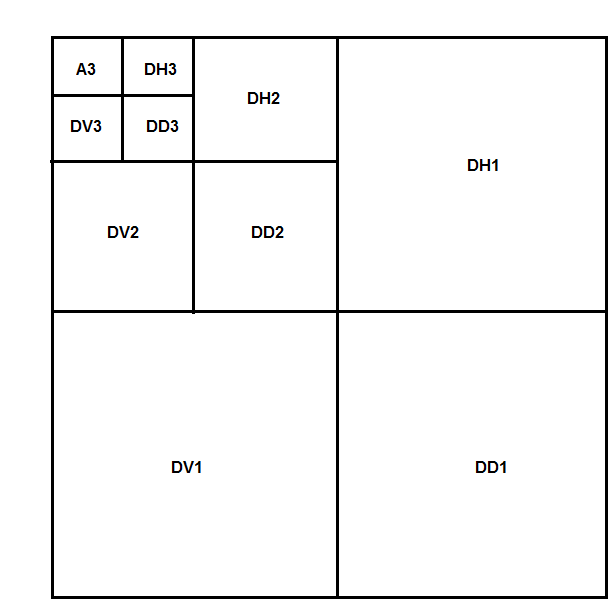}
    \caption{\small {Discrete Wavelet Transform with three levels}}
    \label{fig:2}
\end{figure}
The numbers $1,2 \text{ and } 3$ in Figure 1 indicate the three levels of the DWT, while DH, DV and DD are the detail sub-bands in horizontal, vertical and diagonal directions respectively and A sub-band is the approximation sub-band.

The hard thresholding method eliminates coefficients that are smaller than a certain threshold value $T$. The function of hard thresholding is given by
\[
    f(c)=
\begin{cases}
    c,& \text{if } |c|\ge T\\
    0,              & \text{otherwise}
\end{cases}
\]
where $c$ is the detailed wavelet coefficient. Note that when the threshold value $T$ is too small, the noise reduction is not sufficient. In the other way around, when $T$ is too large, the noise reduction is over sufficient. In this study, one of the most commonly used method is considered to estimate the value $T$ \citep{Braunisch,Prinosil2010,Vanithamani2011}, and the estimator is defined as
\begin{align*}
\hat{T}=\sigma_{image}\sqrt{2 \log(N)}
\end{align*}
where $N$ is the number of image pixels, $\sigma_{image}$ is the standard deviation of noise of the image which can be estimated by \citep{Donoho1995,Prinosil2010,Vanithamani2011}
\begin{align*}
\hat{\sigma}_{image}=\frac{\text{median}|c_D^1|}{0.6745}
\end{align*}
where $c_D^1$ indicates detailed wavelet coefficients from level 1.

\subsection{Priors of the BHM}
After thresholding, a sparse image in wavelet domain is obtained. Before fitting the BHM to the sparse image, prior distributions of the unknown random variables in the third layer should be specified. A flat prior is assigned to $\mu$ which is equivalent to a Gaussian distribution with $0$ precision. The priors for $\log(\tau_{iid})$ and $\log(1/\sigma^2)$ are set to be Log-Gamma$(1,5\times 10^{-5})$. The Log-Gamma distribution is defined as that a random variable $X\sim$ Log-Gamma$(a, b)$ if $\exp(X)\sim$ Gamma$(a, b)$ \citep{SaraMartino2010}. Thus, a Log-Gamma$(a, b)$ distribution is assigned to the logarithm of the precision, e.g. $\log(\tau_{iid})$, which is equivalent to assign a Gamma$(a, b)$ to $\tau_{iid}$, and the prior knowledge about $\tau_{iid}$ is reflected through $a$ and $b$. The prior for $\log(\sqrt{8\nu}/\kappa)$ is set to be Log-Gamma$(1,10^{-2})$. $\sqrt{8\nu}/\kappa$ represents a distance where the covariance is about $0.1$.

$\nu$ is treated as a fixed number. In this study, $\nu=1$ is used, which implies that for a pixel in the sparse image, the conditional mean of this pixel given remaining pixels is only affected by its third order nearest neighbors.

R-INLA \citep{Rue2009} is used to implement the BHM.

\subsection{Evaluation}
$E(s)$ is the parameter of interest, which can also be estimated by another unbiased estimator, i.e. the sample mean $\widehat{E_{sim}(s)}=\frac{1}{n}\sum_{i=1}^ns_i$, where $n$ is the number of simulated images and $s_i$ is the sparsity of the $i$th sparse image. We use $\widehat{E_{sim}(s)}$ as a reference of the true mean value by choosing a larger value for $n$ according to law of large numbers (LLN). A series of $\widehat{E(s)}_I$ can be obtained from the simulated images, which can be used to compare with $\widehat{E_{sim}(s)}$ in order to confirm the theoretical property of the unbiasedness. The unbiasedness is measured by the mean of absolute difference in percentage $|\widehat{E_{sim}(s)}-\widehat{E(s)}_I|*100/N$ over the series of $\widehat{E(s)}_I$. The range of $I$ should not be too small in terms of LLN, whereas it should not be too large in terms of computational time. To state that the estimator is unbiased, the measurement should be close to zero. Besides the unbiasedness, the measurement also indicates the difference in terms of image size. Furthermore, the asymptotic normality of the BHM estimator could also be examined in the simulation study, since in this case $n_1=n_2=128$ (large dimensional size) and $\rho^*=2$ (weak spatial dependence), which indicates the condition $b)$ in Proposition 2 could be possibly met.

The evaluation methods mentioned above can not be extended to real images analysis, since it is not practical to scan one brain even for several times. We only compare $\widehat{E(s)}$ with the true sparsity and measure the absolute difference in percentage, i.e. $|\widehat{E(s)}-s|*100/N$.
\section{Results}
\subsection{Simulated MR images}
Figure 2 is one slice of simulated MR image of human brain with resolution $128\times 128$.
\begin{figure}
    \centering
    \includegraphics[width=0.3\linewidth,keepaspectratio]{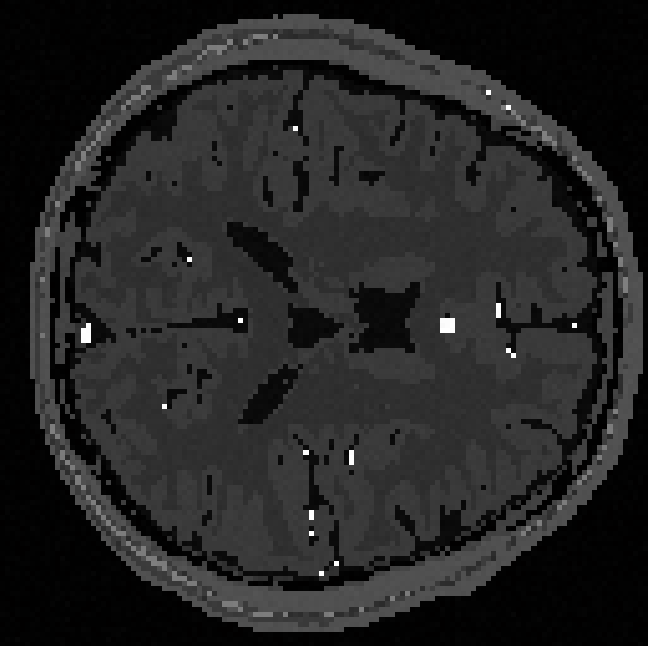}
    \caption{\small {A slice of simulated human brain with resolution $128\times 128$}}
    \label{fig:1}
\end{figure}
The DWT of Daubechies 1 with level 3 is shown in the left sub-figure of Figure 3.
\begin{figure}
    \centering
    \includegraphics[width=0.8\linewidth,keepaspectratio]{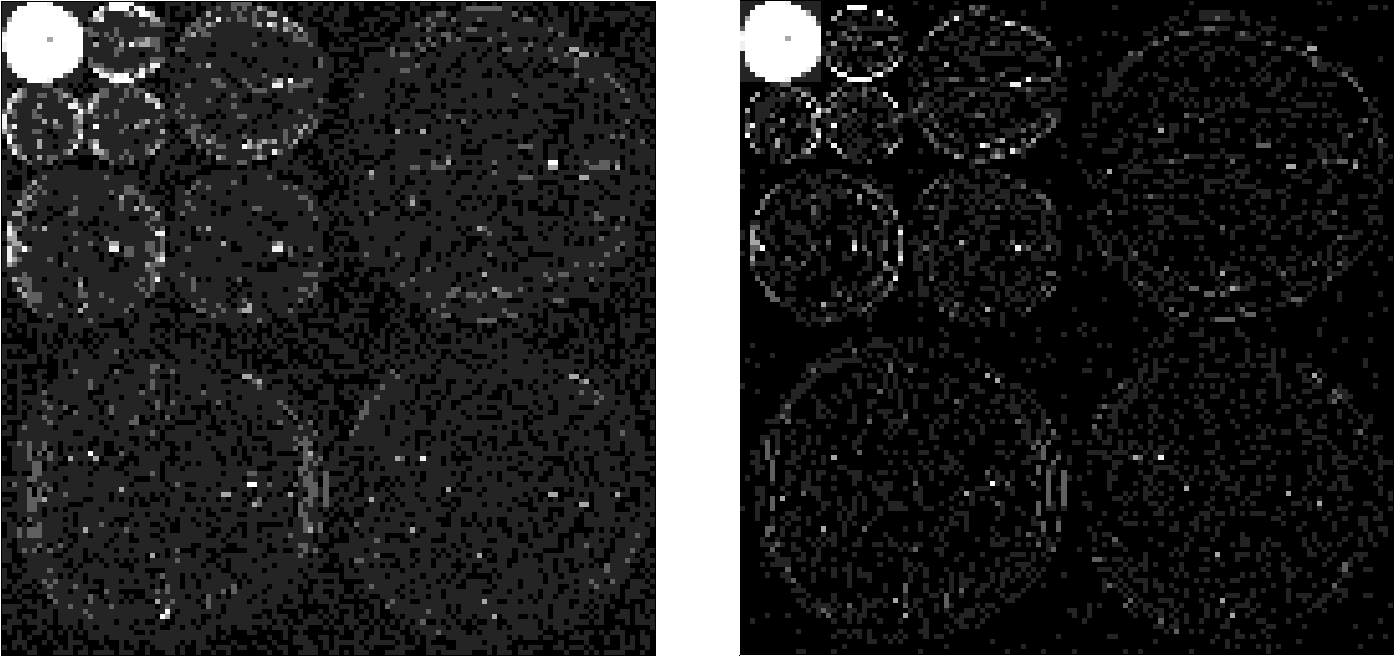}
    \caption{\small {Wavelet transformed image}}
    \caption*{Left: discrete wavelet transformed image. Right: the sparsified image. }
    \label{fig:3}
\end{figure}
The most upper left corner of the sub-figure is the approximation sub-band, while the remaining parts are the detail sub-bands. The estimated threshold value $\hat{T} \approx 0.01$, implying that the detailed coefficients which are less than 0.01 are set to $0$. The sparsified image is shown in the right sub-figure of Figure 3. It is difficult to see the difference between these two sub-figures except that the right sub-figure in general is darker than the left sub-figure, which is a consequence of thresholding. From the right sub-figure in Figure 3, it also shows that the non-zero coefficients are clustered, implying that given a pixel with higher probability to be a non-zero coefficient, its neighboring pixels should also have higher probabilities to have non-zero coefficients, and this relationship falls apart as the distance between two pixels becomes larger. This phenomenon could be described either by Mat\'{e}rn covariance or by its correspond sparse precision matrix. Thus, it confirms the reasonability of using BHM with Mat\'{e}rn covariance.

After fitting the BHM to the sparsified image, the posterior mean of $p_i$ for each pixel can be estimated and is shown in Figure 4.
\begin{figure}
    \centering
    \includegraphics[width=1.1\linewidth,keepaspectratio]{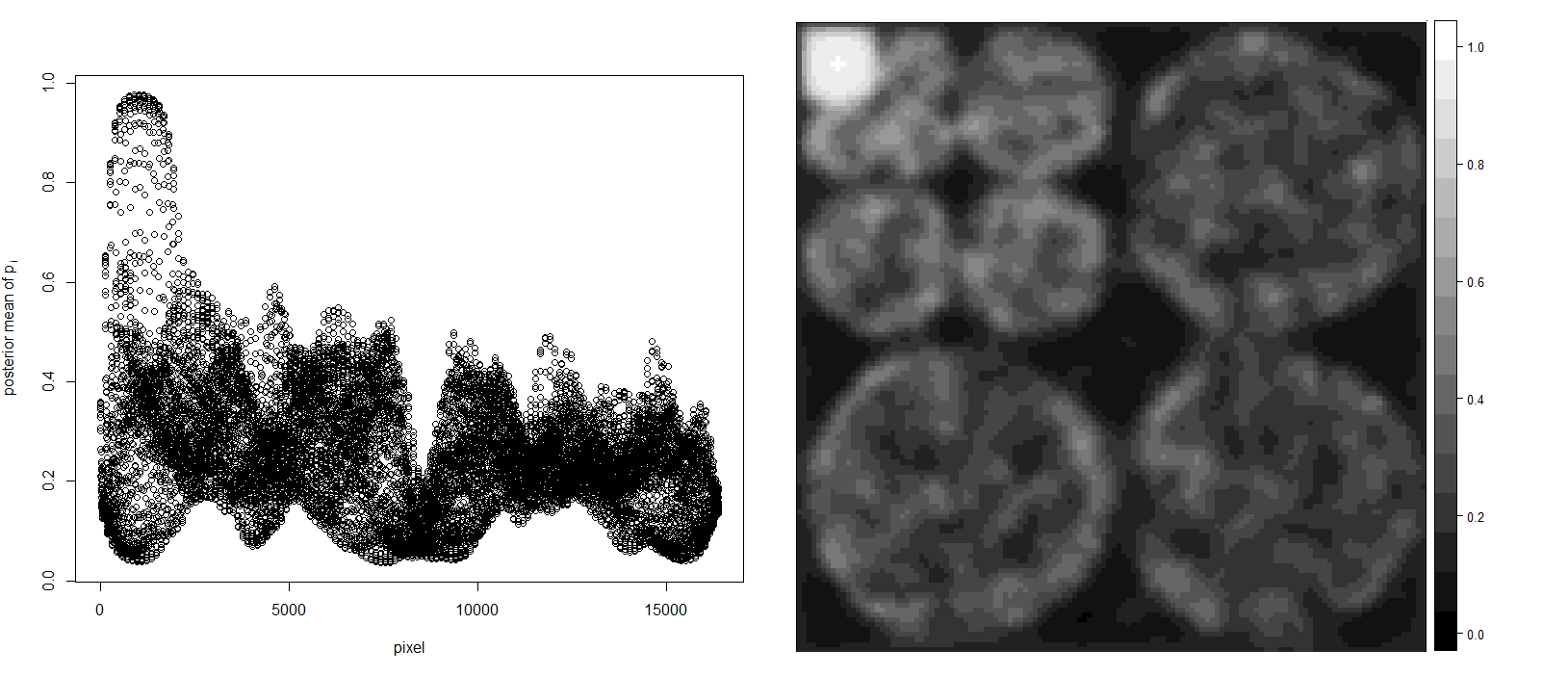}
    \caption{\small {Posterior mean of $p_i$ for every pixel}}
    \caption*{Left: scatter plot of the posterior mean of $p_i$. Right: image form of the posterior mean of $p_i$. }
    \label{fig:4}
\end{figure}
The left sub-figure of Figure 4 is scatter plot of the posterior mean of $p_i$, while the right sub-figure of Figure 4 shows the posterior mean of $p_i$ in image form which is easier to relate the posterior mean of $p_i$ to the sparse image in Figure 3. It shows some pixels, located at the most upper left corner of the right sub-figure of Figure 4, are with higher probabilities to have non-zero coefficients, while most of the remaining pixels are with lower probabilities to have non-zero coefficients. The summation of the posterior mean of $p_i$ over all pixels, i.e. the estimator of $E(s)$, is $4241.768$. The true sparsity of the sparsified MR image is $4213$. The absolute difference between the estimate and the true sparsity in percentage $|\widehat{E(s)}-s|*100/N\approx 0.2$. Afterwards, 1000 simulated MR images were generated under the same settings as the one in Figure 2. $\widehat{E_{sim}(s)}=4219.289$ and the absolute difference between $\widehat{E_{sim}(s)}$ and the estimate in percentage $|\widehat{E_{sim}(s)}-\widehat{E(s)}|*100/N\approx 0.1$.

By far, we only illustrate the performance of the BHM estimator from a single image. Further, to evaluate the unbiasedness and stability of the estimator, 90 $\widehat{E(s)}$'s out of the 1000 simulations are calculated. The scatter plot of the absolute difference in percentage $|\widehat{E_{sim}(s)}-\widehat{E(s)}_I|*100/N$ for the 90 simulations, where $I=1,2,...,90$, is shown in Figure 5.
\begin{figure}
    \centering
    \includegraphics[width=0.7\linewidth,keepaspectratio]{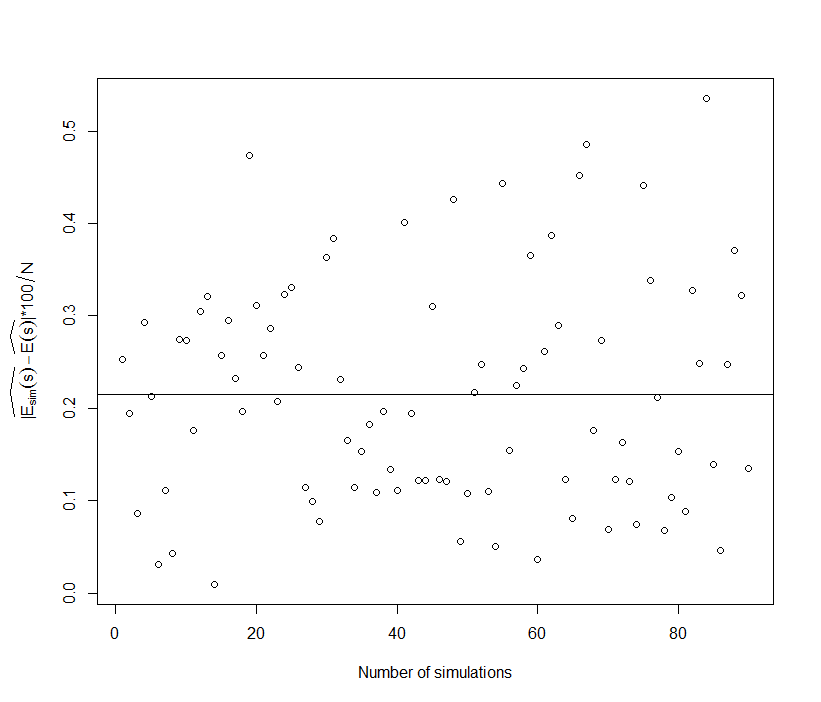}
    \caption{\small {$|\widehat{E_{sim}(s)}-\widehat{E(s)}|*100/N$ for 90 simulations}}
    \label{fig:5}
\end{figure}
The black line is the mean of $|\widehat{E_{sim}(s)}-\widehat{E(s)}_I|*100/N$ over the 90 simulations, and its value is about $0.21$, which is a relatively small number that could confirm the theoretical property of unbiasedness. From Figure 5, $|\widehat{E_{sim}(s)}-\widehat{E(s)}|*100/N\in [0.01,0.53]$. The variance of $\widehat{E(s)}$ based on the 90 simulations is $480.069$. All of these show that the BHM estimator from an image does not diverge too much from $\widehat{E_{sim}(s)}$. Furthermore, a normal quantile-quantile plot of standardized $\widehat{E(s)}$ from the 90 simulations is shown in Figure 6.
\begin{figure}
    \centering
    \includegraphics[width=0.7\linewidth,keepaspectratio]{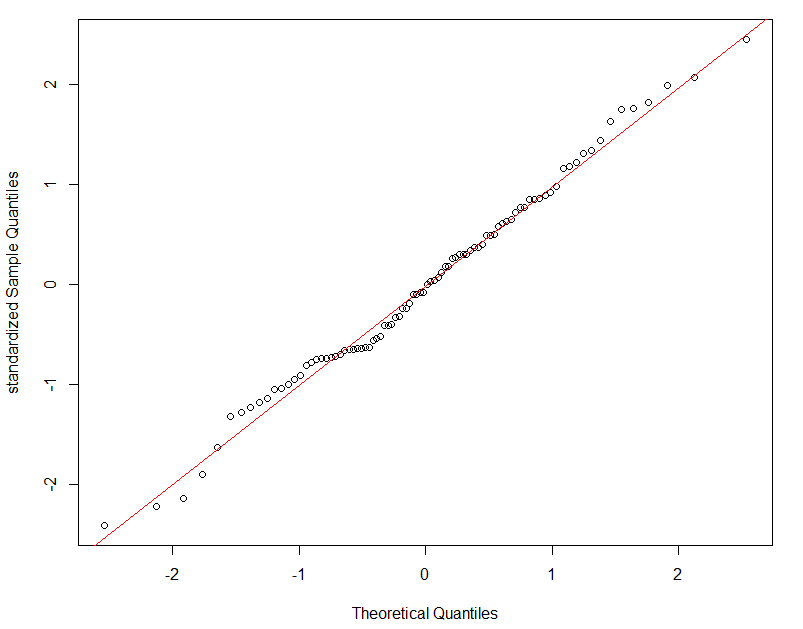}
    \caption{\small {Normal quantile-quantile plot of standardized $\widehat{E(s)}$ from 90 simulations}}
    \label{fig:9}
\end{figure}
The Shapiro-Wilk test of normality is performed and p-value $=0.7715$, implying that the estimator is normal distributed. It confirms the theoretical result about asymptotic normality of the estimator.


\subsection{Real MR image}
The same sparsification procedure as for the simulated images is applied here. One slice of real MR image of human brain with resolution $256\times256$ is shown in Figure 7.
\begin{figure}
    \centering
    \includegraphics[width=0.3\linewidth,keepaspectratio]{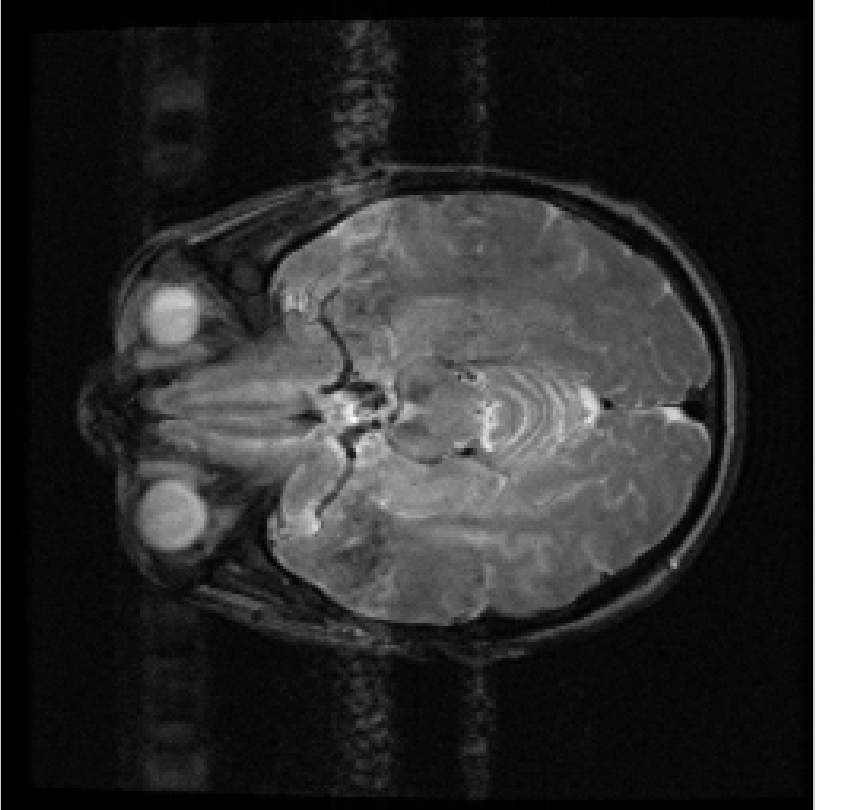}
    \caption{\small {One slice of real human brain with resolution $256\times 256$}}
    \label{fig:6}
\end{figure}
The DWT of Daubechies 1 with level 3 is shown in the left sub-figure of Figure 8. Sequentially, the hard thresholding method is used to eliminate the noise, and the result is shown in the right sub-figure of Figure 8.
\begin{figure}
    \centering
    \includegraphics[width=0.8\linewidth,keepaspectratio]{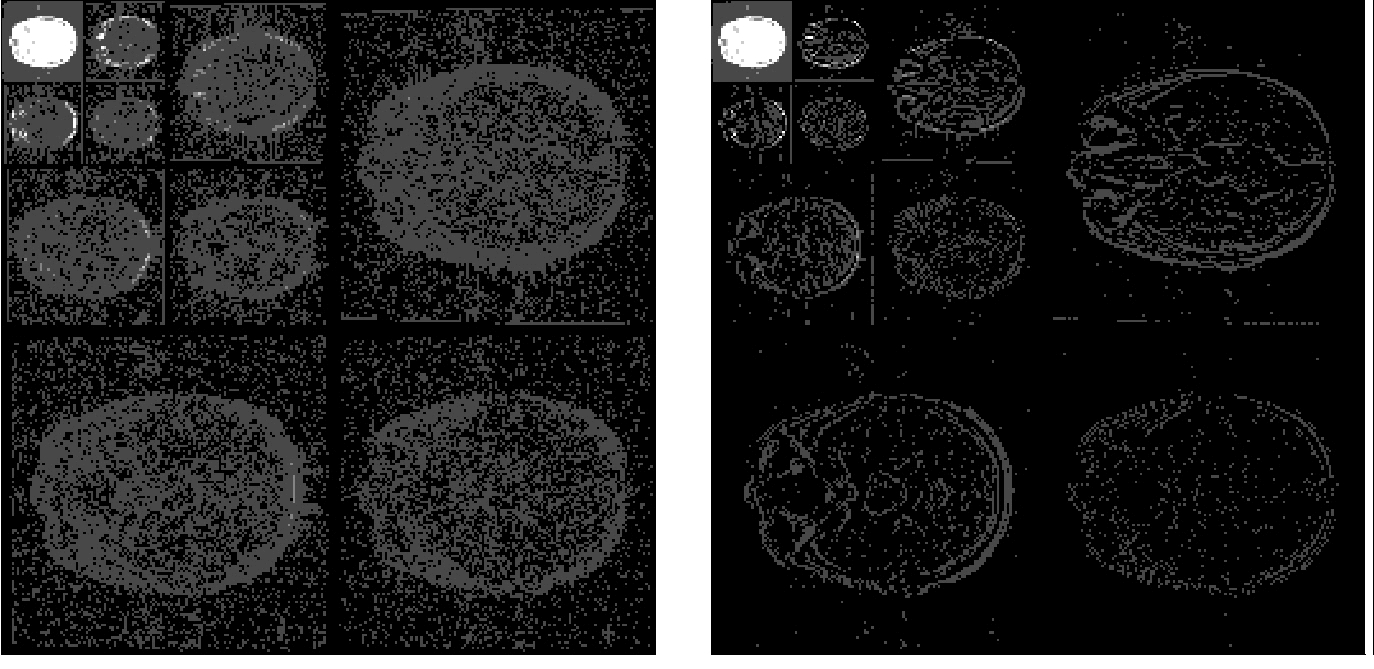}
    \caption{\small {Wavelet transformed image}}
    \caption*{Left: discrete wavelet transformed image. Right: the sparsified image.}
    \label{fig:7}
\end{figure}

By analysing the sparsified image using the BHM, the posterior mean of $p_i$ for each pixel is estimated and shown in Figure 9. The left sub-figure of Figure 9 is scatter plot of the posterior mean of $p_i$. The right sub-figure of Figure 9 is the posterior mean of $p_i$ in the image form. The summation of the posterior mean of $p_i$ over all pixels is $8254.679$. The true sparsity of the sparsified MR image is $8120$. The absolute difference in percentage $|\widehat{E(s)}-s|*100/N\approx 0.2$.
\begin{figure}
    \centering
    \includegraphics[width=1.1\linewidth,keepaspectratio]{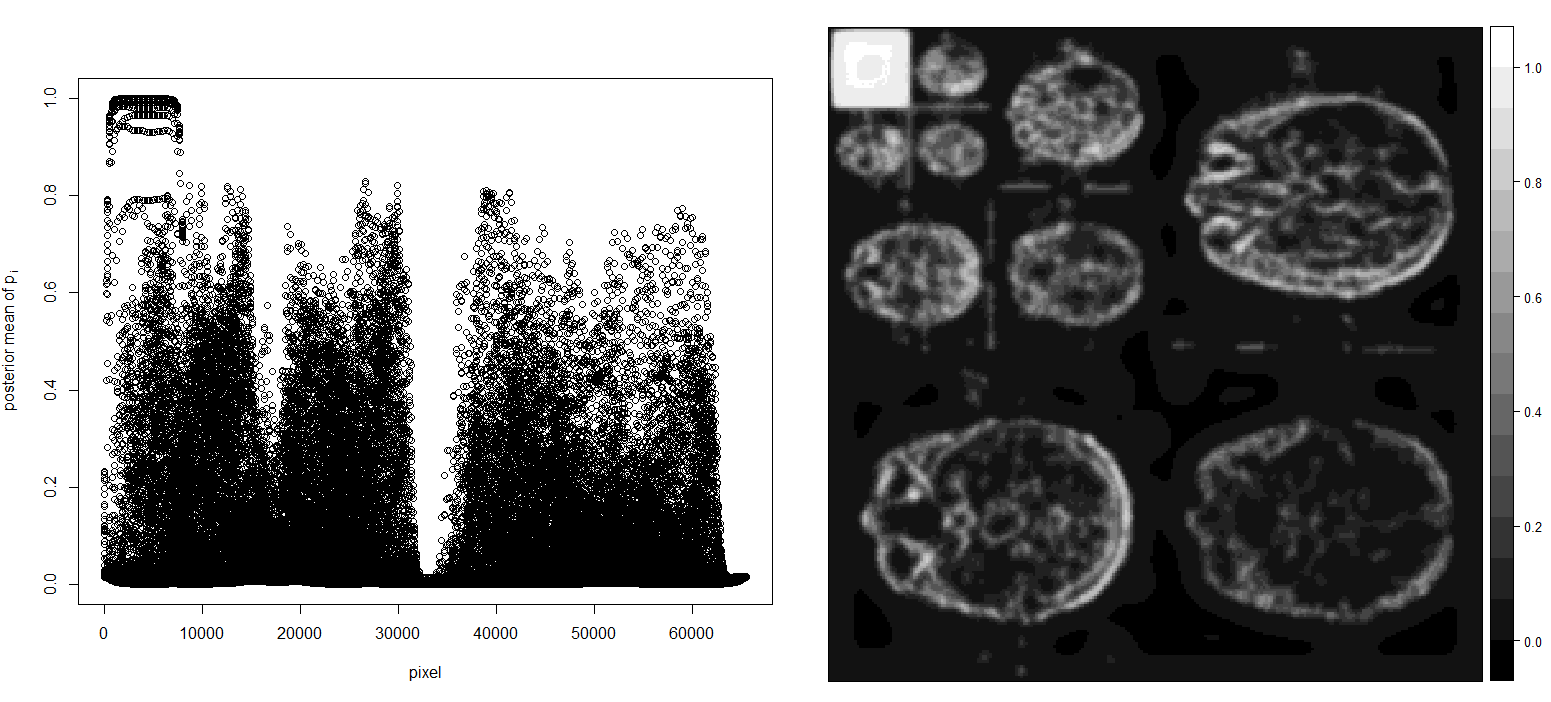}
    \caption{\small {Posterior mean of $p_i$ for every pixel}}
    \caption*{Left: scatter plot of the posterior mean of $p_i$. Right: image form of the posterior mean of $p_i$.}
    \label{fig:8}
\end{figure}
\section{Conclusion and Discussion}
In the study, we propose a novel estimator for the mathematical expectation of sparsity, $E(s)$, and prove that it is unbiased and asymptotically normally distributed. Its variance can also be derived analytically. Simulation study is used to confirm the theoretical results. The absolute difference in percentage, i.e. $|\widehat{E_{sim}(s)}-\widehat{E(s)}|*100/N$, is about $0.21$ in average, which indicates the unbiasedness. The asymptotic normality is also examined through the simulation study. The real data analysis is used for illustration of the new method's applicability in the sense of that $\widehat{E(s)}$ could be used directly in the recovery algorithms of CS for future MR images which are believed to have the same sparse level after sparsification as the one used in the study.

There are some issues that are not considered in this study. Relatively conservative priors are used in the study, thus more informative priors could be used according to properties of MR images. Also different models can be used to model the GMRF, e.g. Besag and Leroux model, and comparisons are made among different models. Besides these, it is possible to fit the model to a 3D image and prove that the statistical properties still hold for 3D images. Based on the image patterns of the posterior means shown in Figure 4 and 9, the dimension of measurement matrix, $A\in \mathbb{C}^{m\times N}$, is possible to be further reduced. Since if some pixels in $\mathbf{x}\in \mathbb{C}^N$ are believed to have zero coefficients, i.e. the probabilities are close to $0$, then these pixels do not need to be measured through CS. Similarly, if some pixels are believed to possess image features, i.e. the probabilities are close to $1$, then the values can be measured by a direct method rather than by CS. And CS is only applied to the remaining pixels, which leads to dimension reduction.

\section{Acknowledgements}
This work was supported by the Swedish Research Council grant [Reg.No.: 340-2013-5342].

\section*{Appendix}
\renewcommand{\thesubsection}{\Alph{subsection}}
\subsection{Proof of Proposition~\ref{prop1}}
\propone*
\begin{proof}[Proof:]
$ $\newline
The unbiasedness follows from the fact that
\begin{align*}
&E\left(\widehat{E(s)}\right)=E\left(\sum_{i=1}^N E\left(p_i|\mathbf{O}\right)\right)=\sum_{i=1}^N E(p_i)=E\left(\sum_{i=1}^N p_i\right)=E\left(\sum_{i=1}^N E\left(\mathbbm{1}_{(x_i\ne 0)}|p_i\right)\right)\\
&\qquad\qquad{ }
=\sum_{i=1}^N E\left(\mathbbm{1}_{(x_i\ne 0)}\right)= E\left( \sum_{i=1}^N \mathbbm{1}_{(x_i\ne 0)}\right)=E(s).\\
&\text{Calculation of variance is straightforward: }\\
&Var\left(\widehat{E(s)}\right)=Var\left(\sum_{i=1}^N E\left(p_i|\mathbf{O}\right)\right)
=\sum_{i=1}^N Var\left(E\left(p_i|\mathbf{O}\right)\right)+2\sum_{1\le i<j\le N}Cov\left(E\left(p_i|\mathbf{O}\right), E\left(p_j|\mathbf{O}\right)\right).
\end{align*}
\end{proof}

\subsection{Proof of Proposition~\ref{prop2}}
\proptwo*
\begin{proof}[Proof:]
$ $\newline
We prove this proposition by verifying that all conditions in the theorem in \citet{Harvey2010} are met, since \citet{Harvey2010} has proven that the sum of $\rho$-radius dependent three dimensionally indexed random variables is asymptotically normally distributed under conditions. The proof for two dimensionally indexed random variables in our case follows similar to that given in \citet{Harvey2010} if
\begin{enumerate}[label=\roman*)]
  \item $\{E\left(p_i|\mathbf{O}\right): 1\le i \le N\}$ are $\rho$-radius dependent random variables, and
  \item $\sigma_k^2$ and $r_k^3$ are finite for a given square size, and
  \item $n_{sq} \to \infty$ and $\phi \to\infty$ at a rate slower than $n_1, n_2$ as well as $n_{sq}$ such that
  \begin{align*}
  \frac{\sum_k Var(S_k^B)}{n_{sq}\sum_k\sigma^2_k}&\to 0\tag{B1}\\
  \frac{(\sum_k r_k^3)^{1/3}}{(\sum_k\sigma_k^2)^{1/2}}&\to 0 \tag{B2}.
  \end{align*}
\end{enumerate}

We verify the three conditions one by one, sequentially. It has been shown that the dependence structure of a transformed GMRF is the same as the one of original GMRF if the transformed GMRF consists of absolutely continuous variables \citep{Prates2015,Cardin2009}. Since Mat\'{e}rn covariance is used to model $\text{logit}(\mathbf{p})$ under the model setting, in which the random variables are $\rho$-radius dependent, the same dependence structure is inherited by $E\left(\mathbf{p}|\mathbf{O}\right)$ if condition $a)$ holds, i.e. two distinct sites for $E\left(\mathbf{p}|\mathbf{O}\right)$ are pairwise non-negative correlated and $\{E\left(p_i|\mathbf{O}\right): 1\le i \le N\}$ are $\rho$-radius dependent random variables. Thus condition i$)$ is met.

By using the fact $0< E\left(p_i|\mathbf{O}\right)< 1 $ almost surely, it follows that
\begin{align*}
r_k^3&=E|S_k|^3=E(S_k)^3= E\left(\sum_jE\left(p_{kj}|\mathbf{O}\right)\right)^3< (2\phi+1)^6,
\end{align*}
where subscript $j=1,2,...,(2\phi+1)^2$, subscript $kj$ denotes the $j$th random variable in the $k$th square. Thus, $r_k^3$ is finite for a given $\phi$, so is $\sigma^2_k$, and the condition $ii)$ is met.

Next we verify $(B1)$ and $(B2)$ which are met under condition $b)$. To verify $(B1)$ and $(B2)$, we need use $Var(S_k^B)$ and $\sigma_k^2$ besides $r_k^3$ which has been shown above.

First, we calculate $Var(S_k^B)$. Since
\begin{align*}
Var\left(E\left(p_{kl}|\mathbf{O}\right)\right)<E\left(E\left(p_{kl}|\mathbf{O}\right)\right)^2<1,
\end{align*}

\begin{align*}
Cov\left(E\left(p_{kl}|\mathbf{O}\right),E\left(p_{kq}|\mathbf{O}\right)\right)&=E\left(E\left(p_{kl}|\mathbf{O}\right)E\left(p_{kq}|\mathbf{O}\right)\right)-E(p_{kl})E(p_{kq})\\
&<E\left(E\left(p_{kl}|\mathbf{O}\right)E\left(p_{kq}|\mathbf{O}\right)\right)\\
&<1,
\end{align*}
and $\phi>\rho^*\ge1$, it follows that
\begin{align*}
Var(S_k^B)&=Var\left(\sum_lE\left(p_{kl}|\mathbf{O}\right)\right)=\sum_lVar\left(E\left(p_{kl}|\mathbf{O}\right)\right)+\sum_{l\ne q}Cov\left(E\left(p_{kl}|\mathbf{O}\right),E\left(p_{kq}|\mathbf{O}\right)\right)\\
&<2(2\phi+1)\rho^*+(\rho^*)^2+[2(2\phi+1)\rho^*+(\rho^*)^2][2(2\phi+1)\rho^*+(\rho^*)^2-1] \\
&=4(\rho^*)^2(2\phi+1)^2+4(\rho^*)^3(2\phi+1)+(\rho^*)^4\\
&<C_1(2\phi+1)^4,
\end{align*}
where $l,q=1,2,...,2(2\phi+1)\rho^*+(\rho^*)^2$, $C_1$ is a positive constant that does not depend on $n_1,n_2$.

Then we calculate $\sigma_k^2$. Since $\{E\left(p_i|\mathbf{O}\right): 1\le i \le N\}$ are $\rho$-radius dependent random variables and any two random variables in the set are non-negative correlated, it follows that
\begin{align*}
\sigma_k^2&=Var\left(\sum_jE\left(p_{kj}|\mathbf{O}\right)\right)\ge\sum_jVar\left(E\left(p_{kj}|\mathbf{O}\right)\right)\ge C_2(2\phi+1)^2>0,
\end{align*}
where $C_2>0$ is the smallest variance of $E\left(p_{kj}|\mathbf{O}\right)$ and also does not depend on $n_1,n_2$.

Therefore,
\begin{align*}
\frac{\sum_k Var(S_k^B)}{n_{sq}\sum_k\sigma^2_k}&\le\frac{n_{sq}C_1(2\phi+1)^4}{n_{sq}^2C_2(2\phi+1)^2}\propto \frac{(2\phi+1)^2}{n_{sq}}\to 0,\text{and}\\
\frac{(\sum_k r_k^3)^{1/3}}{(\sum_k\sigma_k^2)^{1/2}}&\le\frac{n_{sq}^{1/3}(2\phi+1)^2}{n_{sq}^{1/2}C_2^{1/2}(2\phi+1)}\propto \frac{2\phi+1}{n_{sq}^{1/6}}\to 0 ,
\end{align*}
where the two limits hold if condition $b)$ holds.
\end{proof}
\bibliographystyle{apalike}      
\bibliography{ref}   

%
\end{document}